\newcommand{\new}[1]{{#1}}
\newtheorem{theorem}{Theorem}
\newcommand{\OPT}{\ensuremath{\mathit{OPT}}\xspace}
\newcommand{\weight}[1]{\ensuremath{\mathcal{L}_{#1}}\xspace}
\newcommand{\smin}{\ensuremath{s_{min}}\xspace}
\newcommand{\smax}{\ensuremath{s_{max}}\xspace}
\newcommand{\mapping}{\textsc{SpeedScaling}\xspace} 
\newcommand{\scheduling}{\textsc{Speeds\&Scheduling}\xspace}
\newcommand{\spalgo}{\textsc{SPG-speed}\xspace}
\newcommand{\cvxmapping}{\textsc{CVX-speed}\xspace}
\newcommand{\cvxsched}{\textsc{APX-sched}\xspace}
\newcommand{\roundmap}{\textsc{APX-D-speed}\xspace}
\newcommand{\ilpmap}{\textsc{ILP-D-speed}\xspace}
\newcommand{\roundsched}{\textsc{APX-D-sched}\xspace}
\newcommand{\ilpsched}{\textsc{ILP-D-sched}\xspace}
\title{Energy Minimization in DAG Scheduling on MPSoCs at Run-Time: Theory and Practice}
\date{}
\author[1]{Bertrand Simon}
\author[2]{Joachim Falk}
\author[1]{Nicole Megow}
\author[2]{Jürgen Teich}
\affil[1]{Universität Bremen}
\affil[ ]{\textnormal{\texttt{\{bsimon, nicole.megow\}@uni-bremen.de}}}
\affil[2]{Friedrich-Alexander-Universität Erlangen-Nürnberg (FAU)}
\affil[ ]{\textnormal{\texttt{\{joachim.falk, juergen.teich\}@fau.de}}}
\date{}                     
\begin{document}

	\title{
	Energy Minimization in DAG Scheduling on MPSoCs at Run-Time: Theory and Practice
	}

	\maketitle
	
	\begin{abstract}
Static (offline) techniques for mapping applications given by task
graphs to MPSoC systems often deliver overly pessimistic and thus
suboptimal results w.r.t. exploiting time slack in order to minimize
the energy consumption. This holds true in particular in case
computation times of tasks may be workload-dependent and becoming
known only at runtime or in case of conditionally executed tasks or
scenarios. This paper studies and quantitatively evaluates different
classes of algorithms for scheduling periodic applications given by
task graphs (i.e., DAGs) with precedence constraints and a global
deadline on homogeneous MPSoCs purely at runtime on a per-instance
base. We present and analyze algorithms providing provably optimal
results as well as approximation algorithms with proven guarantees on
the achieved energy savings. For problem instances taken from
realistic embedded system benchmarks as well as synthetic scalable
problems, we provide results on the computation time and quality of
each algorithm to perform a) scheduling and b) voltage/speed
assignments for each task at runtime. In our portfolio, we distinguish
as well continuous and discrete speed (e.g., DVFS-related) assignment
problems. In summary, the presented ties between theory (algorithmic
complexity and optimality) and execution time analysis deliver
important insights on the practical usability of the presented
algorithms for runtime optimization of task scheduling and speed
assignment on MPSoCs.
	\end{abstract}

	\newpage
	
	\section{Introduction}
	
	Dynamic voltage and frequency scaling (DVFS) on modern processors is a means to actively control the power and energy consumption of an MPSoC. It is used for 
    thermal chip management in combination with dynamic power management (DPM) \cite{Chen0K14}. But it can also be used in the context of dynamic energy minimization of 
    programs executed on the MPSoC, e.g., for real-time applications. Here, a plethora of methods has been proposed to optimize the mapping (including task assignment and scheduling) 
    of tasks of one or multiple applications to processor cores including the selection of processor speed(s) such that, given worst-case task execution times, a global deadline is met.  
    \new{While first investigations only considered} uni-processor systems, a great number of approaches has emerged to apply DVFS optimization algorithms offline when 
    targeting MPSoCs \cite{LangenJ09,Li08,NelsonMMSNG11}. These approaches, however, generally suffer from assuming fixed execution times of tasks given (e.g., WCETs). 
    However, for most applications, the execution times of tasks may depend on the workload to be processed. Or, tasks may only be conditionally executed
    according to control flow information \cite{ShinK03}. Hence, a static assignment of schedule and speeds for executing tasks might not be optimal.
    Choudhury et al. \cite{ChoudhuryCK07} proposed a combination of offline techniques to compute worst-case and average case execution times of tasks. At run-time, 
    a computationally inexpensive method calculates observed slack, and adapts processor speeds for energy reduction,
    while still guaranteeing a global deadline not to be violated. Other approaches such as \cite{SinghDK13} exploit the knowledge of special 
    models of computation such as synchronous dataflow (SDF) to apply a mixed offline and online DVFS optimization for MPSoCs. 
    Still, the structure of the task graph and thus periodicity of executions is assumed static. 
    In most general applications, however, both the execution times and the task graph structure may vary over time. 
    Here, approaches using control/dataflow graphs (CDFGs) have been proposed such as in the work of Tariq et al. \cite{TariqW17}.
    However, the presented computationally complex analysis and optimization is again purely static as task execution probabilities are used
    and thus only the \emph{expected} energy consumption is targeted. 
   
    \new{On the theoretical side, Yao et al.~\cite{yaoDS95} initiated the algorithmic study of speed scaling in 1995. This area received a lot of attention since then; see the surveys~\cite{albers10,iraniP05}.} 
     \new{Most of these studies focus on scheduling {\em independent tasks} (without precedences) and a {\em single processor}. Regarding the speed choice model, only few theoretical works address the {\em discrete speed} model which is computationally much more complex but more realistic; see, e.g.~\cite{KwonK05,LiY05,MegowV18}.}
     Most related to our investigations is the work by Aupy et al.~\cite{aupy} that studies the
     problem of minimizing the energy consumption under a given mapping of
     tasks to cores, and where the power consumed by a core
     running at speed $s$ is equal to $s^\alpha$. They consider both the
     continuous and the discrete speed model. Pruhs et
     al.~\cite{pruhs2008speed} focus on the problem of minimizing the
     makespan under an energy budget in the continuous speed model with the same power law. In this framework, they designed an
     approximation algorithm with a polylogarithmic ratio. Bampis et
     al.~\cite{bampis} later proposed a 2-approximation for the same
     problem, which matches the best known algorithm for
     makespan minimization without energy considerations~\cite{graham}. Our
     contributions include to rephrase these results for our framework
     (energy minimization with a fixed deadline) and analyze the
     algorithms performance experimentally. \new{We also add new results building on this earlier work.}
     
    A major goal of this paper is to analyze whether algorithms providing provably optimal results 
    or at least approximation bounds on the quality of the results 
    can be implemented and practically applied in a real MPSoC system to be executed at runtime.
    In this regard, at least to our knowledge, the following questions have so far not satisfactorily been answered for the problem of scheduling task graphs
    purely at runtime based on dynamically emerging task dependence structure and worst case execution times.
    \medskip
    
    \begin{itemize}
     \item {\bf Are there in theory sound algorithms that also can be applied in practice on an MPSoC?}
           E.g., depending on the absolute time scale, many real-world applications do require solutions to be computed within a time scale of 1 to 10 ms
           in order to be of practical use.
     \item {\bf How do these execution times scale with the problem size?}
           Problem instances ranging in size between 10 to 500 tasks should 
           be handled in practice within such 
           time scales. If not possible: 
     \item {\bf Are there fast and scalable  algorithms with provable approximation bounds on the optimality of energy consumption?}
    \end{itemize}
	
		
	Our main contribution is to bring together theoretical and computational results for continuous and discrete speed scaling for precedence-constrained task systems with the goal to minimize the energy consumption.
	We distinguish two classes of problems: The first assuming a processor assignment and schedule of tasks on cores given, and the second computing the full schedule including the task to processor mapping as well while minimizing energy.
	We present algorithms building on mathematical optimization techniques such as convex and integer linear programs as well as rounding solutions of relaxations. Previously known methods are adapted to our setting and we also provide new results. For the full portfolio of considered settings (continuous/discrete speed choice, unlimited/bounded number of processors) we give both, an exact algorithm and a computationally efficient (i.e., polynomial-time) algorithm. In cases where optimal polynomial-time algorithms are ruled out under standard complexity assumptions, we give a polynomial-time algorithm with an approximation guarantee, i.e., we guarantee for any input instance that the total energy needed by our algorithm to finish all tasks by the deadline is at most a certain factor away from a minimum energy needed by any algorithm.  
	Such theoretical approximation results give very rigorous worst-case guarantees on the solution quality under any possible input. They are of high importance particularly for safety-critical real-time applications. 
	In our experiments on real-world instances, it is shown that the solution quality is substantially better than the ones guaranteed in our theorems for the worst case. 
		
	Moreover, we rigorously analyze the applicability of all of our algorithms on problem instances taken from realistic embedded system benchmarks as well as synthetic scalable problems.
	As one result, it turns out that the mathematical optimization methods are applicable for MPSoC system applications despite their complexity. Running times between 1 to 10 $ms$ for instances up to 100 tasks are in the acceptable range for many applications.
	If not, also a linear-time algorithm (previously used in similar settings  \cite{prasmus,europar,aupy}) that combines optimality with scalable performance for a majority of task graph instances exhibiting a series/parallel dependence structure is presented. 
	Overall, our results include new and old algorithms with optimality/approximation guarantees while revealing their practicability for use in MPSoCs. 
		
		
	\section{Formal problem definition and notation}
	
	We are given a set of tasks to be executed without preemption on $m$ cores. Precedence relations between the tasks are given as a directed acyclic graph $G=(V,E)$, where each node in the graph is associated with a task. If there is an arc in $E$ from task $j$ to task $k$ then task $k$ cannot start before task $j$ is completed. A task $j\in V$ has a nominal execution time, or weight, $w_j\geq 0$. 
	
	For comparability of the analyzed algorithms, we assume a homogeneous multi-processor architecture in the following with uniform cores. 
	At any time, the speed $s$ of a core can be set to any {\em eligible} value between $\smin> 0$ and $\smax\geq \smin$, and it is part of a scheduling algorithms decision to which speed to set the processor. It depends on the particular model, which values in $[\smin, \smax]$ are eligible; we consider the {\em continuous model}, in which any rational value is eligible, and the {\em discrete model}, which allows speeds only from a given finite set of speeds. A core that is set to speed $s$ consumes power at the rate $s^\alpha$, where $\alpha\geq 1$ is a small constant. The total energy consumed is the power consumption integrated over time. 
	
	In the continuous model, we may assume that a task is executed at a uniform speed. This follows directly from the convexity of the power function~\cite{yaoDS95}. 
	For discrete speeds, we add the restriction that a task has to run at a uniform speed. This is a reasonable assumption as in many processing environments it is not possible to change the processor speed during the execution of a task.
	If a task $j$ of weight $w_j$ is executed at speed $s_j\in[\smin,\smax]$, then the time to complete is $x_j=w_j/s_j$ and the energy consumed during the computation of $j$ is 
	\begin{equation}
	E_j=x_j\cdot s_j^\alpha= w_j\cdot s_j^{\alpha-1}=w_j^\alpha/x_j^{\alpha-1}.\label{eq:energy}
	\end{equation}
	
	We consider the following problem: given a deadline $D\!>\!0$ and a node-weighted graph $G=(V,E,w)$, schedule all tasks in graph $G$ and decide upon the processor speeds such that all tasks finish  before the deadline $D$ and the total energy consumption is minimized. If minimizing the energy consumption is intractable, we design approximation algorithms. An algorithm is called an $r$-approximation if it always computes a solution finishing before the deadline, with an energy consumption being at most $r$ times the minimal energy consumption.

	In our investigations we distinguish two problem classes of different complexity:	
	\begin{itemize}
		\item \mapping: we are given the mapping of each task to its core and the order in which each core executes the tasks mapped to it (encoded in $G$). 
		The problem is then equivalent to minimizing the \emph{critical path} of the graph $G$. That is, find speeds such that the total execution time of the longest path (w.r.t.\ execution times $x_j$) is minimized. 
		\item \scheduling: in addition to selecting the speeds at which each task should be executed, we provide a schedule for the tasks, i.e., we determine the core and the starting time for each task. 
	\end{itemize}

	\section{Continuous speeds}
	
	We consider the setting in which each core can be set to any rational speed value in the given interval $[\smin,\smax]$.
	
	\subsection{\mapping Problem}
	
	As mentioned earlier, this problem is equivalent to determining the speeds such as to minimize the \emph{critical path} of the graph $G$. This problem has been studied to some extent before. We summarize relevant known algorithms and provide new ones. We present two algorithms: 
	\begin{enumerate}
		\item an optimal polynomial time algorithm \cvxmapping which relies on a convex programming formulation inspired by the idea of Bampis et al.~\cite{bampis};
		\item a linear-time algorithm \spalgo for a special graph class, namely Series-Parallel Graphs, which are very common in practice. Our algorithm is a small modification of an algorithm in  \cite{prasmus,europar,aupy} and it computes an exact optimal solution when there is no limitation in the speeds. Our experiments show that this limitation is not prohibitive in our context. 
	\end{enumerate}
	
	Details on the algorithms follow below. The experimental evaluation is presented in Section~\ref{sec:experiments}.	
	
	\subsubsection{\cvxmapping}

	We provide a convex programming formulation with linear constraints that computes the exact solution for the energy minimization problem in the \mapping setting. Such programs can be solved in polynomial time up to an arbitrary precision~\cite{nesterovN94} with the Ellipsoid method. The formulation is inspired by a convex program for makespan minimization by Bampis et al.~\cite{bampis}.
	
	Each task $j$ is associated to a constant speed $s_j$. The variable $x_j$ represents the processing time of Task $j$ in the solution, which is equal to $w_j/s_j$. The variable $d_j$ represents the completion time of task $j$. 
	\begin{align}
	\text{min } \sum_{j\in V}  & \frac{w_j^\alpha}{x_j^{\alpha-1}} \label{eq:cvx1} \\
	\text{s.t.\quad} d_j&\leq D, & \forall j\in V\\
	x_j &\leq d_j, & \forall j \in V\\
	d_j+x_k &\leq d_k, & \forall (j,k)\in E\\
	w_j/\smax \leq x_j & \leq w_j/\smin, & \forall j\in V. \label{eq:cvx2}
	\end{align}
	
	The first three constraints ensure that tasks are executed one after the other, without preemption, respecting the precedence constraints and meeting the deadline $D$. Constraint~\ref{eq:cvx2}  ensures  that the speed limits are respected. Finally, the objective function computes the energy consumption for the schedule that is to be minimized. For a computed solution of the convex program, the speed $s_j$ for task $j$ is implied by $w_j/x_j$. We therefore have the following result.
	
	\begin{theorem}
		\cvxmapping computes an optimal solution in polynomial time.
	\end{theorem}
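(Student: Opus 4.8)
The plan is to prove two things: that the mathematical program in \eqref{eq:cvx1}--\eqref{eq:cvx2} is genuinely convex, so that the cited Ellipsoid-based method of~\cite{nesterovN94} applies and returns a solution in polynomial time, and that an optimal solution of this program corresponds exactly to an energy-optimal speed assignment for the \mapping problem.

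First I would establish convexity. All constraints in \eqref{eq:cvx1}--\eqref{eq:cvx2} are linear in the variables $x_j$ and $d_j$, so the feasible region is a polyhedron and hence convex. It remains to check the objective $\sum_j w_j^\alpha / x_j^{\alpha-1}$. For each $j$ the term is $w_j^\alpha \cdot x_j^{1-\alpha}$ with $w_j^\alpha \geq 0$; on the domain $x_j > 0$ the single-variable function $x \mapsto x^{1-\alpha}$ has second derivative $\alpha(\alpha-1)x^{-\alpha-1} \geq 0$ for $\alpha \geq 1$ and is therefore convex. A nonnegative combination of convex functions of disjoint variables is convex, so the objective is convex. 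Minimizing a convex objective over a polyhedron is a convex program, and by~\cite{nesterovN94} it is solvable to arbitrary precision in polynomial time.

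The conceptual core is the correspondence between feasible points of the program and valid schedules, which I would prove in both directions. Given any speed assignment with $s_j \in [\smin,\smax]$ whose induced schedule meets the deadline, set $x_j = w_j/s_j$ and let $d_j$ be the completion time of $j$ in the earliest-start schedule; then constraint \eqref{eq:cvx2} holds by the speed bounds, the constraints $d_j + x_k \leq d_k$ for $(j,k)\in E$, $x_j\leq d_j$, and $d_j \leq D$ hold because the schedule is valid and meets $D$, and by \eqref{eq:energy} the objective equals the energy consumed. Conversely, from any feasible point set $s_j = w_j/x_j$; constraint \eqref{eq:cvx2} guarantees $s_j \in [\smin,\smax]$, and the constraints $x_j \leq d_j$, $d_j + x_k \leq d_k$, and $d_j \leq D$ certify that the $d_j$ are completion times of a schedule respecting all precedences and finishing by $D$. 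The decisive modeling observation is that, because the \mapping setting fixes the core assignment and the per-core execution order and encodes both inside $G$, there is no remaining resource contention: a set of processing times is schedulable on the $m$ cores within the deadline if and only if the longest path in $G$ weighted by the $x_j$ (the critical path) has length at most $D$, which is precisely what these constraints enforce.

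Combining the two directions, the program and the \mapping problem have identical feasible energy values, so the optimum of one is the optimum of the other; since the speed $s_j = w_j/x_j$ is recovered directly from the returned $x_j$, the algorithm outputs an energy-optimal solution. The step I expect to require the most care is this equivalence between the linear scheduling constraints and critical-path realizability: one must argue that the minimal completion times witnessing feasibility are exactly the longest-path values and that no additional machine-capacity constraints are needed once the order is baked into $G$. A minor point to address is that the cited solver returns a solution of arbitrary but polynomially controllable precision rather than an exact rational optimum; I would remark that this suffices for the claimed guarantee, as is standard for convex programs.
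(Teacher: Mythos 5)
Your proposal is correct and follows essentially the same route as the paper, which argues (much more tersely) that the constraints model precedence-feasibility and the deadline, that the objective equals the energy via \eqref{eq:energy}, and that the convex program is polynomial-time solvable by~\cite{nesterovN94}. Your added details --- the second-derivative check of $x\mapsto x^{1-\alpha}$, the two-directional correspondence between feasible points and schedules resting on the per-core order being encoded in $G$, and the remark about solver precision --- are all sound elaborations of the paper's sketch rather than a different argument.
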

	
	\subsubsection{\spalgo} 
	
	In the most general definition by Lawler~\cite{lawler1978sequencing}, series-parallel graphs (or SP-graphs) 
	are defined recursively as being either a single task, the series composition of two graphs (noted $(G_1;G_2)$), or the parallel composition of two graphs (noted $(G_1||G_2)$). In $(G_1;G_2)$, the tasks of $G_2$ cannot start before all tasks of $G_1$ have terminated. In $(G_1||G_2)$, there exist no precedence constraints between the tasks of $G_1$ and~$G_2$.
	
	In the context of minimizing the makespan of malleable jobs, an algorithm has been proposed and studied in \cite{prasmus,europar}, and a similar algorithm has been used in our context in \cite{aupy}. The principle of the algorithm is to define an \emph{equivalent} task of a series and a parallel composition of two graphs. Specifically, if $\weight{G}$ represents the equivalent weight of $G$, we have:
	
	\begin{itemize}
		\item $\weight{T_i} = w_i$
		\item $\weight{(G_1;G_2)} = \weight{G_1} + \weight{G_2}$
		\item $\weight{(G_1||G_2)}^{\alpha} = \weight{G_1}^\alpha + \weight{G_2}^\alpha$
	\end{itemize}
	
	The problem of selecting the speeds for a graph $G$ in order to minimize the energy consumption is then equivalent to the problem of selecting the speed for a unique task of weight \weight{G}. The minimum energy necessary to schedule a graph $G$ under a deadline $D$ is therefore equal to $\weight{G}^\alpha/D^{\alpha-1}$, using the speed $\weight{G}/D$, see \Cref{eq:energy}. 
	In order to compute the speed at which each task has to be scheduled in such a solution, the algorithm \spalgo associates a speed $s$ to each subgraph:
	
	\begin{itemize}
		\item $s(G) = \weight{G}/D$
		\item In $(G_1;G_2)$, $s(G_1)=s(G_2) = s(G_1;G_2)$.
		\item In $(G_1||G_2)$, $s(G_1) = s(G_1||G_2) \weight{G_1} / \weight{(G_1||G_2)}$. 
	\end{itemize}
	
		This result however requires to use speeds arbitrarily large, so the solution found may not respect the speed bounds, as specified in the following theorem.
	
	\begin{theorem}[\!\cite{prasmus,europar,aupy}]
		Given an SP-graph and ignoring the constraints $\smin$ and $\smax$, \spalgo computes an optimal solution in linear time.
	\end{theorem}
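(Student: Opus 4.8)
The plan is to prove, by structural induction on the series--parallel decomposition tree of $G$, the stronger invariant that \emph{for every deadline $D$} the minimum energy needed to schedule $G$ equals $\weight{G}^\alpha/D^{\alpha-1}$, and that the speeds assigned by \spalgo attain exactly this value. Phrasing the claim for an arbitrary deadline, rather than only the given one, is what lets the induction go through, since in a series composition the two parts will inherit fractions of $D$ as their own effective deadlines. The base case is immediate: to minimize energy a single task $T_i$ should occupy the whole interval, so its optimal speed is $w_i/D$ and its energy is $w_i^\alpha/D^{\alpha-1}=\weight{T_i}^\alpha/D^{\alpha-1}$, consistent with $\weight{T_i}=w_i$.

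Two structural facts drive the inductive step. First, energy is additive over tasks, so the energy of a composition is the sum of the energies of its two parts. Second, the critical path of a series composition $(G_1;G_2)$ is the sum of the critical paths of $G_1$ and $G_2$, while the critical path of a parallel composition $(G_1||G_2)$ is their maximum. Hence the deadline constraint decouples: in the parallel case each branch may independently use the whole interval $[0,D]$, whereas in the series case we must split the budget as $D=D_1+D_2$. Combined with the induction hypothesis (the minimum energy of $G_i$ under any budget $L$ is $\weight{G_i}^\alpha/L^{\alpha-1}$), the parallel case follows at once: the two energies add to $(\weight{G_1}^\alpha+\weight{G_2}^\alpha)/D^{\alpha-1}=\weight{(G_1||G_2)}^\alpha/D^{\alpha-1}$ by the parallel rule, and the induced speed $s(G_1)=\weight{G_1}/D=s(G_1||G_2)\,\weight{G_1}/\weight{(G_1||G_2)}$ is exactly the one \spalgo assigns.

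I expect the series case to be the crux. There one minimizes $\weight{G_1}^\alpha/D_1^{\alpha-1}+\weight{G_2}^\alpha/D_2^{\alpha-1}$ subject to $D_1+D_2=D$. The cleanest route is H\"older's inequality with conjugate exponents $\alpha$ and $\alpha/(\alpha-1)$, applied to $f_i=\weight{G_i}/D_i^{(\alpha-1)/\alpha}$ and $g_i=D_i^{(\alpha-1)/\alpha}$, which yields
\[
\frac{\weight{G_1}^\alpha}{D_1^{\alpha-1}}+\frac{\weight{G_2}^\alpha}{D_2^{\alpha-1}}\ \ge\ \frac{(\weight{G_1}+\weight{G_2})^\alpha}{(D_1+D_2)^{\alpha-1}}\ =\ \frac{\weight{(G_1;G_2)}^\alpha}{D^{\alpha-1}},
\]
with equality precisely when $D_i$ is proportional to $\weight{G_i}$, i.e.\ when both parts run at the common speed $\weight{G}/D$. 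This simultaneously reproduces the series rule $\weight{(G_1;G_2)}=\weight{G_1}+\weight{G_2}$ and the speed rule $s(G_1)=s(G_2)=s(G_1;G_2)$, closing the induction. The subtlety to verify carefully here is that additivity of the critical path lets one reduce any optimal schedule of the composition to optimal schedules of the two parts under budgets $D_1,D_2$, so that the induction hypothesis genuinely applies to each part.

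Finally, for the running time I would invoke that the series--parallel decomposition tree has size $O(|V|+|E|)$ and can be constructed in linear time; \spalgo then performs one bottom-up pass computing the equivalent weights $\weight{\cdot}$ and one top-down pass propagating the speeds, spending $O(1)$ per node, for an overall linear running time. Since the argument never uses $\smin$ or $\smax$, the common speeds produced may exceed the allowed range, which is exactly why the statement is qualified by ``ignoring the constraints $\smin$ and $\smax$.''
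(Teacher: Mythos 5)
The paper does not actually prove this theorem --- it is stated as imported from \cite{prasmus,europar,aupy}, with only the defining recurrences for $\weight{\cdot}$ and $s(\cdot)$ given --- so there is no in-paper argument to compare against; what you have written is a self-contained proof, and it is correct. The structure is the natural one: strengthening the induction hypothesis to all deadlines $D$, using that in the \mapping setting feasibility is exactly the critical-path constraint (so the parallel case decouples and the series case reduces to splitting $D=D_1+D_2$), and closing the series case with H\"older/Radon, whose equality condition $D_i\propto\weight{G_i}$ recovers precisely the algorithm's speed rule $s(G_1)=s(G_2)=s(G_1;G_2)$. Two small points to tidy up. First, the H\"older step needs $\alpha>1$ for the conjugate exponents to exist; for $\alpha=1$ the energy is $\sum_j w_j$ independently of the speeds, so the claim is vacuous there and should be dispatched separately. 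Second, in the series case the lower bound deserves one more sentence: given any feasible solution for $(G_1;G_2)$, set $D_1$ to the realized critical-path length of $G_1$ and $D_2=D-D_1$; then the restriction to each $G_i$ is feasible for deadline $D_i$, so its energy is at least $\weight{G_i}^\alpha/D_i^{\alpha-1}$ by the inductive hypothesis, and only then does the minimization over $D_1+D_2=D$ apply. You gesture at this ("reduce any optimal schedule of the composition to optimal schedules of the two parts") but it is the one place where the induction hypothesis is invoked in the lower-bound direction and should be made explicit. The linear-time claim via the $O(|V|+|E|)$ decomposition tree and two constant-work-per-node passes is standard and fine, as is your closing remark that the argument produces unbounded speeds, which is exactly why the theorem excludes the $\smin,\smax$ constraints.
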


%
%
%

	\subsection{\scheduling Problem}
	\label{subsec:cont_sched}
	

	Consider the setting in which an algorithm determines both, the speed
	allocation and the actual schedule including the mapping of tasks to
	cores. If the optimal solution requires to use the speed $\smax$ for
	each task, then computing a schedule meeting a given deadline is
	already an NP-hard problem, as it is reducible to the classic
	$P|\mathit{prec}|C_{max}$ problem in the Graham three-field notation. The \scheduling problem can
	therefore not have an approximation algorithm unless $P=NP$, as this includes computing a schedule meeting the given deadline. The best known
	scheduling algorithm for $P|\mathit{prec}|C_{max}$ is a
	2-approximation\cite{graham}, and cannot be improved under some
	complexity assumptions~\cite{Svensson11}. We therefore assume that the
	optimal solution uses speeds at most $\smax/2$, in order to focus on
	the problem of minimizing the energy and not on meeting the deadline,
	which is not the core of this paper.
		We show the following result.
	
	
	\begin{theorem}\label{thm:apx-continuous-with-scheduling}
		\cvxsched is a $2^{\alpha-1}$-approximation if the optimal solution uses speeds at most $\smax/2$.
	\end{theorem}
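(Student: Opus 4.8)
The plan is to realize \cvxsched as the combination of a convex relaxation that produces provably energy-cheap speeds, a uniform speed-up by a factor of $2$, and Graham's list scheduling to turn those speeds into an actual assignment of tasks to the $m$ cores. The two quantities that govern whether a schedule can meet the deadline are the \emph{critical path} (the longest chain with respect to the processing times) and the \emph{total work} $\sum_j x_j$: in every feasible solution the critical path is at most $D$, and the total work is at most $mD$, since $m$ cores perform at most $mD$ units of work by time $D$. I would build both bounds into the relaxation.

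First I would solve the convex program of \cvxmapping, but with $E$ containing only the precedence arcs (not the per-core ordering), augmented by the area constraint $\sum_{j\in V} x_j \le mD$ and by the speed cap $x_j \ge 2w_j/\smax$ (i.e.\ $s_j \le \smax/2$). The objective is convex and all added constraints are linear, so this is solvable in polynomial time. The key claim is that its optimum $E_R$ satisfies $E_R \le \OPT$: taking $\OPT$'s processing times $x_j$ and completion times $d_j$, one checks that $d_j \le D$, that $x_j \le d_j$, and that $d_j + x_k \le d_k$ for every precedence arc $(j,k)$; moreover $\sum_j x_j \le mD$ holds because $\OPT$ is a schedule on $m$ cores, and $s_j \le \smax/2$ holds by hypothesis. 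Hence $\OPT$ is feasible for the relaxation and $E_R \le \OPT$.

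Next I would double every speed returned by the relaxation, setting $s_j' = 2s_j$ and $x_j' = x_j/2$. This keeps all speeds in $[\smin,\smax]$ (the cap $s_j \le \smax/2$ is exactly what makes $2s_j \le \smax$, while $2s_j \ge \smin$ is immediate) and multiplies the energy by $2^{\alpha-1}$, since $w_j(2s_j)^{\alpha-1} = 2^{\alpha-1} w_j s_j^{\alpha-1}$; thus the total energy is $2^{\alpha-1}E_R \le 2^{\alpha-1}\OPT$. It remains to show that list scheduling the tasks with processing times $x_j'$ on $m$ cores meets the deadline. After doubling, the critical path is at most $D/2$ and the total work at most $mD/2$. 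Graham's bound for precedence-constrained list scheduling gives a makespan of at most $\frac{1}{m}\sum_j x_j' + L'$, where $L'$ is the critical path; substituting the two halved bounds yields a makespan of at most $\frac{mD/2}{m} + \frac{D}{2} = D$. Combining feasibility with the energy bound proves the claimed $2^{\alpha-1}$-approximation.

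The main obstacle, and the step I would treat most carefully, is ensuring that the relaxation simultaneously lower-bounds $\OPT$ and retains enough structure for Graham's argument. Relaxing the exact core-capacity requirement down to just the critical-path and area inequalities is what makes the program convex and keeps $\OPT$ feasible, but it is precisely the \emph{joint} availability of both halved bounds after the speed-up that lets Graham's inequality close at exactly $D$; omitting the area constraint would leave $\sum_j x_j'$ uncontrolled and break the deadline guarantee, which is where I expect the argument to be most delicate. I would also double-check the boundary form of Graham's bound (the $1-\tfrac{1}{m}$ factor on the critical-path term only helps) and the feasibility of the relaxation, which follows immediately from $\OPT$ itself being a feasible point.
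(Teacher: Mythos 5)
Your proof is correct and follows essentially the same route as the paper: a convex relaxation whose feasible set contains the optimum, Graham's volume-plus-critical-path bound to certify the deadline, and a factor $2^{\alpha-1}$ from doubling speeds. The only difference is cosmetic --- you write the relaxation at deadline $D$ with the speed cap $s_j\le \smax/2$ and then double, whereas the paper writes the equivalent program directly with volume and completion times bounded by $D/2$ and speeds up to $\smax$; the two are identical under the substitution $x_j\mapsto x_j/2$.
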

	
	
	The main idea of the algorithm builds on work in~\cite{bampis} for the related problem of minimizing the makespan under a fixed energy budget. The algorithm consists of two steps: firstly, a convex program is solved for computing the optimal speeds in a particular relaxation. Secondly, we fix these speeds and run a greedy heuristic for assigning the tasks to cores.
	The convex programming relaxation is as follows (recall that $m$ is the number of cores).
	\begin{align}
	\text{min } \  \sum_{j\in V} \frac{w_j^\alpha}{x_j^{\alpha-1}}\\
	\text{s.t. } \ \sum_{j\in V} x_j /m &\leq D/2\\
	d_j &\leq D/2, & \forall j\in V\\
	x_j &\leq d_j, & \forall j \in V\\
	d_j+x_k &\leq d_k, & \forall (j,k)\in E\\
	w_j/\smax \leq x_j & \leq w_j/\smin, &\forall j\in V.
	\end{align}
	
	Given an optimal solution for this program, we fix the speeds for the tasks. In the second step of the algorithm, we schedule the tasks using a list scheduling algorithm proposed by Graham~\cite{graham}. That is, we consider tasks in any topological ordering (i.e., respecting the given precedence order) and assign a task to the core with currently smallest last completion time. If the makespan $C$ obtained is smaller than $D$, the speeds are then lowered by a factor $C/D$.
	
	\begin{proof}[Proof of Theorem \ref{thm:apx-continuous-with-scheduling}]
		For a fixed speed assignment let $V:=\sum_{i\in V} x_i /m$ denote the {\em volume} and let $L$ denote the length of the critical path in $G$. Both, volume and critical path, are well known lower bounds on the makespan. Graham's list scheduling~\cite{graham} yields a makespan of at most $V+L$. %
		The convex program computes a speed assignment that minimizes the energy among all speed assignments for which both the volume and the critical path are not larger than $D/2$. Hence, Graham's list scheduling achieves a schedule where all tasks complete before $V+L \leq D$ and, thus, all tasks meet the deadline.
		
		On the other hand, one can show that the energy consumed by this schedule is at most a factor $2^{\alpha-1}$ larger than the optimal. Indeed, consider an optimal schedule of makespan~$D$ using speeds at most $\smax/2$, and multiply every speed by $2$. We obtain a speed assignment which is a solution to the convex program above, and whose energy cost is a factor $2^{\alpha-1}$ away from the optimal. As the speed assignment computed by the algorithm minimizes the objective function, its energy cost is not larger.
	\end{proof}
	
	In Section~\ref{sec:experiments}, we will show that on real-world instances, the solution quality is substantially better than the one guaranteed in Theorem~\ref{thm:apx-continuous-with-scheduling} above. 
	Finally, we remark that the problem is computationally highly intractable. Even for a given speed assignment, it is NP-complete to compute an optimal schedule even if all tasks have unit execution time~\cite{Ullman75} or if there are no precedence relations~\cite{GareyJohnson78}. 

	\section{Discrete speeds}
	
	
	Consider the setting in which each core can run at $k\in \mathbb{N}$ possible speeds $v_1,v_2,\dots, v_k$ with $v_i< v_{i+1}$. Let the maximum ratio of speeds be $r = \max_i v_{i+1}/v_i$. Note that the mapping problem in this setting is already NP-hard even with $k=2$~\cite{aupy}. However, the more general model in which speed modifications are allowed during the execution admits a polynomial exact algorithm~\cite{aupy}. We also underline that the approximation ratios given in this section still hold if the optimal solution is allowed to use any rational speed in the interval $[v_1;v_k]$.
	
	\subsection{\mapping problem}
	
	Assume the task-to-core assignment is given and we need to determine the speeds such as to minimize the \emph{critical path} of the graph $G$. %
	We present two algorithms: (1) an optimal exponential time algorithm \ilpmap based on an integer linear programming (ILP) formulation, (2) a polynomial time algorithm \roundmap that solves a convex program within an approximation factor $r^{\alpha-1}$. 
	
	
	\subsubsection{\ilpmap}
	
	We define $nk$ boolean variables $y_{i,\ell}$ which are equal to 1 if task $i$ runs at speed $v_\ell$ and to 0 otherwise, and consider the following program similar to the convex program \eqref{eq:cvx1}-\eqref{eq:cvx2}. The main difference is that the execution time of a task $i$ is now equal to  $\sum_{\ell\leq k}\frac{w_i}{v_\ell}y_{i,\ell}$ and its energy consumption is equal to $\sum_{\ell\leq k} w_i {v_\ell^{\alpha-1}} y_{i,\ell}$.
	\begin{align}
	\text{minimize }  \sum_{i\in V} {w_i} &\sum_{\ell\leq k} {v_\ell^{\alpha-1}} y_{i,\ell} \label{ilp:1}\\
	d_i&\leq D &\forall i\in V\\
	\sum_{\ell\leq k}\frac{w_i}{v_\ell}y_{i,\ell} &\leq d_i &\forall i \in V\\
	d_i + \sum_{\ell\leq k}\frac{w_j}{v_\ell}y_{j,\ell} &\leq d_j &	\forall (i,j)\in E
	\end{align}
	\begin{align}
	\sum_{\ell\leq k} y_{i,\ell} &= 1 & \forall i\in V\\
	\forall \ell \leq k ~~y_{i,\ell}&\in\{0,1\} & \forall i\in V. \label{ilp:2}
	\end{align}
	
	The correctness of this ILP formulation therefore follows from the correctness of \cvxmapping.
	
	\begin{theorem}
		\ilpmap computes an optimal solution in exponential time.
	\end{theorem}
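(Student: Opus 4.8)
The plan is to prove the statement in two parts: correctness of the formulation and the exponential running-time bound. For correctness, I would first argue that the integrality constraints force each task to commit to exactly one discrete speed, and then establish a one-to-one correspondence between feasible integer solutions of \eqref{ilp:1}--\eqref{ilp:2} and feasible discrete-speed assignments for the \mapping problem, under which the two objectives coincide.

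The key observation is that $\sum_{\ell\leq k} y_{i,\ell}=1$ together with $y_{i,\ell}\in\{0,1\}$ guarantees that for each task $i$ exactly one index $\ell^*(i)$ has $y_{i,\ell^*(i)}=1$. Consequently the substituted processing time $\sum_{\ell\leq k}\frac{w_i}{v_\ell}y_{i,\ell}$ collapses to $w_i/v_{\ell^*(i)}$, which is precisely the execution time $x_i$ of task $i$ run at speed $v_{\ell^*(i)}$, and the objective term $w_i\sum_{\ell\leq k} v_\ell^{\alpha-1}y_{i,\ell}$ collapses to $w_i v_{\ell^*(i)}^{\alpha-1}$, matching the energy of \Cref{eq:energy}. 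After this substitution the remaining constraints ($d_i\leq D$, $x_i\leq d_i$, and $d_i+x_j\leq d_j$ for arcs $(i,j)$) are structurally identical to those underlying \cvxmapping. I would note that feasibility of the $d_i$-variables is equivalent to the induced critical path being at most $D$, since the $d_i$ only serve as completion-time witnesses and do not enter the objective. Invoking the optimality of \cvxmapping, these constraints correctly encode a schedule that respects the precedences and the deadline, so the integer optimum equals the minimum energy over all discrete-speed assignments.

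For the running time, I would observe that the program has $nk$ binary variables, so its optimum can be found by any exact ILP method, or most simply by enumerating the at most $k^n$ assignments satisfying $\sum_{\ell\leq k} y_{i,\ell}=1$ and, for each, checking feasibility and evaluating the objective via a single longest-path computation. This yields an exact algorithm running in exponential time. Since the discrete \mapping problem is NP-hard already for $k=2$~\cite{aupy}, an exponential dependence is unavoidable under standard complexity assumptions, so the stated bound is essentially the best achievable by an exact approach.

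I do not expect a serious obstacle here; the only step requiring care is verifying that the linearized processing-time and energy expressions reduce \emph{exactly} to the single-speed quantities under integrality, and that no feasible discrete-speed schedule is inadvertently excluded by the formulation. Beyond that, the argument is a direct transfer of the optimality of \cvxmapping to the integer-restricted feasible region.
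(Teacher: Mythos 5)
Your proposal is correct and follows essentially the same route as the paper, which likewise reduces correctness to that of \cvxmapping after observing that the integrality and assignment constraints collapse the linearized processing-time and energy expressions to the single-speed quantities. You merely spell out in more detail the bijection and the enumeration-based exponential-time bound that the paper leaves implicit.
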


	In general, integer linear programs cannot be solved in polynomial time. However, our experiments show that on the datasets considered (up to 1000 tasks), this algorithm is at most 5 times slower than the polynomial-time algorithm \cvxmapping.
	
	\subsubsection{\roundmap}
	
	The following algorithm is inspired by \cite{aupy}. In a first step, we 
	compute optimal {\em continuous} speeds $\bar s_j$ for each task $j$. This is done by running the fast algorithm \spalgo, and, in case this algorithm does not succeed ({\it e.g.,} the SP-graph restriction is not met), we solve the convex program \eqref{eq:cvx1}-\eqref{eq:cvx2} (algorithm \cvxmapping) with $\smin=v_1$ and $\smax=v_k$. Then, the we run each task $j$ at the speed $s_j$ that is equal to the smallest speed $v_i$ such that $v_i\geq \bar s_j$.
	
	\begin{theorem}
		\roundmap computes an $r^{\alpha-1}$-approximate solution in polynomial time.
	\end{theorem}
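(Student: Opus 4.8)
The plan is to compare the rounded discrete solution against the continuous optimum computed in the first step, exploiting that rounding each speed \emph{upwards} both preserves feasibility and inflates the per-task energy by a controlled factor. First I would fix notation: let $\bar s_j$ be the continuous speeds produced in step~1 and let $\bar E=\sum_{j\in V} w_j \bar s_j^{\,\alpha-1}$ be their energy (via \eqref{eq:energy}). By the theorems on \spalgo and \cvxmapping, $\bar s$ is an \emph{optimal} continuous speed assignment for deadline $D$ on $G$ under the bounds $\smin=v_1$, $\smax=v_k$; in particular $\bar s_j\in[v_1,v_k]$, so the rounding step is well defined. Since every discrete speed assignment (indeed every rational assignment in $[v_1,v_k]$) is also a feasible continuous assignment, $\bar E$ is a lower bound on the optimum: $\bar E\le\OPT$.

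Next I would establish the two properties of the rounding $\bar s_j\mapsto s_j$, where $s_j$ is the smallest $v_i\ge\bar s_j$. First, \textbf{feasibility}: because $s_j\ge\bar s_j$, each execution time $x_j=w_j/s_j$ can only decrease, so no path length in $G$ increases and the critical path remains $\le D$; hence all tasks meet the deadline. Second, the \textbf{speed blow-up}: if $\bar s_j=v_1$ then $s_j=\bar s_j$, and otherwise $\bar s_j\in(v_{i-1},v_i]$ for some $i\ge 2$, whence $s_j/\bar s_j = v_i/\bar s_j < v_i/v_{i-1}\le r$. Thus $s_j\le r\,\bar s_j$ for every task.

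The energy bound then follows termwise: by \eqref{eq:energy} the rounded energy of task $j$ is $w_j s_j^{\,\alpha-1}\le w_j (r\bar s_j)^{\alpha-1}=r^{\alpha-1}\, w_j \bar s_j^{\,\alpha-1}$, using $\alpha\ge 1$ so that the exponent is nonnegative. Summing over $V$ gives a total energy of at most $r^{\alpha-1}\bar E\le r^{\alpha-1}\OPT$. The polynomial running time is inherited directly: step~1 is either the linear-time \spalgo or the polynomial-time \cvxmapping, while the rounding and the final energy evaluation are linear.

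The argument is largely routine; the one point requiring care is that the rounding is well defined and feasible, i.e.\ that the continuous optimum genuinely lies in $[v_1,v_k]$ and that rounding \emph{up} (rather than to the nearest speed) is precisely what simultaneously guarantees the deadline and the clean per-task ratio~$r$. The remark preceding this subsection, namely that the guarantee is measured against any rational optimum in $[v_1,v_k]$, is exactly what licenses the inequality $\bar E\le\OPT$, so I would be careful to phrase the lower-bound step against the continuous optimum rather than only the discrete one.
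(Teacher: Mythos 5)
Your proof is correct and takes essentially the same route as the paper's: compare against the continuous optimum (which lower-bounds \OPT since any discrete assignment is a feasible continuous one), observe that rounding speeds upward preserves feasibility, and bound the per-task blow-up via $s_j\le r\,\bar s_j$ to get the factor $r^{\alpha-1}$. If anything, your termwise formulation is slightly cleaner than the paper's displayed chain, which pulls the $j$-dependent factor $(s_j/\bar s_j)^{\alpha-1}$ outside the sum.
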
	
	
	\begin{proof}
		Consider a speed setting computed by the algorithm. Observe that the tasks respect the deadlines as the speeds $s_j$ are not smaller than the speeds $\bar s_j$ that gave a valid solution. Let $\OPT$ be the energy consumed in an optimal solution.
		First, note that the energy consumed by executing each task at speed $\bar s_j$ is not larger than $\OPT$. 
		The algorithm runs each task $j$ at speed $s_j$, consuming an energy $w_js_j^{\alpha-1}$. The total energy consumed is then:
		$$E = \sum_j w_js_j^{\alpha-1} \leq \left(\frac{s_j}{\bar{s_j}}\right)^{\alpha-1} \sum_j w_j \bar s_j ^{\alpha-1} \leq r^{\alpha-1} \OPT. \qedhere$$		
	\end{proof}

	\subsection{\scheduling problem}
	
	In this setting, an algorithm determines both, the speed allocation and the actual schedule including the mapping of tasks to cores.	
	We present two algorithms:
	(1) an optimal exponential time algorithm \ilpsched based on solving an ILP, (2) a polynomial time algorithm \roundsched that solves a convex program within approximation factor $(2r)^{\alpha-1}$. 
	
	
	\subsubsection{\ilpsched}
	
	We extend 
	the ILP \eqref{ilp:1}-\eqref{ilp:2} by adding $nm$ boolean variables $z_{i,c}$ equal to $1$ if task $i$ is executed on core $c$ and to $0$ otherwise, as well as $n^2$ variables $e_{i,j}$ indicating if task $i$ has to be scheduled before task $j$. 
	In particular, if two tasks are executed on the same core, then either $e_{i,j}$ or $e_{j,i}$ equals $1$.

	\newlength{\myl}
	\settowidth{\myl}{$\forall \ell \leq k,~$}
	\begin{align}
	\text{minimize }  \sum_{i\in V} {w_i} &\sum_{\ell\leq k} {v_\ell^{\alpha-1}} y_{i,\ell} \\
	d_i&\leq D &\forall i\in V\\
	\sum_{\ell\leq k}\frac{w_i}{v_\ell}y_{i,\ell} &\leq d_i & \forall i \in V\\
	d_i + \sum_{\ell\leq k}\frac{w_j}{v_\ell}y_{j,\ell} \leq d_j &+D(1-e_{i,j}) \label{eq:edge} & \forall i,j\in V\\
	\sum_{\ell\leq k} v_{i,\ell} &= 1 & \forall i\in V\\
	v_{i,\ell}&\in\{0,1\} & \hspace{-\myl} \forall \ell \leq k,~ \forall i\in V\\
	\sum_{c\leq m} z_{i,c} &= 1 & \forall i\in V 
	\end{align}
	\begin{align}
	\settowidth{\myl}{$\forall i\in V,~$}
	z_{i,c}&\in\{0,1\} & \forall i\in V, ~\forall c \leq m\\
	e_{i,j}&\in\{0,1\} &\forall i,j \in V\\
	e_{i,j} &=1 & \forall (i,j) \in E\\
	\settowidth{\myl}{$\forall i,j\in V,~$}
	z_{i,c} + z_{j,c} - e_{i,j} - e_{j,i} &\leq 1 \label{eq:core}  &\hspace{-\myl}\forall i,j\in V,~ \forall c\leq m
	\end{align}
	
	\Cref{eq:edge} ensures that task $j$ is executed after task $i$ if
	$e_{i,j}=1$, and does not have any impact if $e_{i,j}=0$, so the
	program returns the same result as \ilpmap on a graph where the edges
	are represented by the variables $e_{i,j}$. The second important
	constraint is \Cref{eq:core}, which ensures that if two tasks
	belong to the same core, either $e_{i,j}$ or $e_{j,i}$ equals~$1$. Therefore, a valid valuation of the variables $e_{i,j}$ corresponds to a directed graph which contains the edges of $E$, and which contains an edge between any two tasks that are placed on the same core (by the variables $z_{i,c}$). This corresponds to a valid input to the \ilpmap programming, so we have the following result.
	
	\begin{theorem}
		\ilpsched computes an optimal solution in exponential time.
	\end{theorem}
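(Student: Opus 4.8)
The plan is to establish correctness by showing that the feasible solutions of \ilpsched are in exact correspondence with valid inputs to \ilpmap, so that optimality follows directly from the already-established optimality of \ilpmap. The approach mirrors the informal argument given just before the statement, which I would make rigorous in two directions: first that every schedule produced by \ilpsched is feasible (respects precedences, the deadline, and places at most one task at a time on each core), and second that the objective value coincides with the minimum energy over all valid schedules.

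First I would interpret the variables $e_{i,j}$ as defining a directed graph $G'$ on $V$. By the constraint $e_{i,j}=1$ for all $(i,j)\in E$, this graph contains all original precedence edges. By \Cref{eq:core}, whenever $z_{i,c}=z_{j,c}=1$ for some core $c$, at least one of $e_{i,j},e_{j,i}$ equals $1$, so any two tasks assigned to the same core are comparable in $G'$. I would then argue that \Cref{eq:edge} forces the completion-time variables $d_i$ to respect every edge of $G'$ exactly as the precedence constraints were respected in \ilpmap: when $e_{i,j}=1$ the constraint reads $d_i+\sum_{\ell}\frac{w_j}{v_\ell}y_{j,\ell}\leq d_j$, identical to the precedence constraint of \ilpmap, and when $e_{i,j}=0$ the added term $D$ renders the constraint vacuous since $d_j\geq 0$ and all completion times are at most $D$.

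Next I would verify that any feasible valuation yields an acyclic $G'$ and hence a genuine schedule. Acyclicity I would deduce from the existence of the $d_i$ satisfying the edge constraints: along any directed cycle in $G'$ the completion times would have to strictly increase (each task has positive processing time since $w_i>0$ and speeds are finite), a contradiction. Given acyclicity, $G'$ is a valid DAG input to \ilpmap that additionally makes every pair of same-core tasks comparable, so executing the tasks in a linear extension on their assigned cores produces a non-overlapping, deadline-meeting schedule with exactly the energy computed by the objective. Conversely, any feasible schedule of the \scheduling problem induces a valuation of the $z_{i,c}$ and $e_{i,j}$ (orienting same-core pairs by execution order) satisfying all constraints with the same energy, so the two optimization problems have equal optimal value. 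Optimality of \ilpsched then follows from the optimality of \ilpmap.

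The main obstacle I anticipate is the equivalence argument for \Cref{eq:core} together with acyclicity: one must be careful that the constraints force \emph{every} pair of co-located tasks to be ordered (not merely allow it) and that the resulting relation, combined with the precedence edges, never produces a cycle that would make the linear-program constraints infeasible yet still be counted as a ``solution.'' Handling this cleanly, and confirming that the big-$M$ style relaxation with constant $D$ is tight enough (using $d_j\leq D$ to bound the slack), is where the real care is needed; the energy accounting and the reduction to \ilpmap are then routine.
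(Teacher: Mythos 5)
Your proposal follows essentially the same route as the paper: interpret the $e_{i,j}$ variables as a directed graph extending $E$ in which co-located tasks are ordered, and reduce correctness to that of \ilpmap. You fill in details the paper leaves implicit (the acyclicity of the induced graph via the strictly increasing $d_i$, and the converse direction showing every feasible schedule yields a feasible valuation of equal energy), which only strengthens the argument.
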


	\subsubsection{\roundsched}
	
	This algorithm combines the ideas of \cvxsched and \roundmap: assuming again that the optimal solution uses speeds at most $v_k/2$, solve the convex program of \cvxsched in order to associate each task to a speed $\bar s_j\in[v_1;v_k]$. Then, the algorithm runs each task $j$ to the speed $s_j$ equal to the smallest speed $v_i$ such that $v_i\geq \bar s_j$, and schedules the tasks using a list scheduling algorithm.
	
	\begin{theorem}\label{roundtheo}
		\roundsched computes a $(2r)^{\alpha-1}$-approximate solution in polynomial time if the optimal solution uses speeds at most $v_k/2$.
	\end{theorem}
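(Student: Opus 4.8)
The plan is to compose the two arguments already developed for \cvxsched (Theorem~\ref{thm:apx-continuous-with-scheduling}) and \roundmap: the feasibility of Graham's list scheduling under the volume and critical-path bounds, and the bounded speed-up incurred by rounding continuous speeds up to the discrete grid. I would establish two facts separately: (i) the computed schedule meets the deadline $D$, and (ii) its energy is at most $(2r)^{\alpha-1}$ times \OPT. Since the two sources of loss are independent, I expect the ratio to factor multiplicatively as $2^{\alpha-1}\cdot r^{\alpha-1}=(2r)^{\alpha-1}$.

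For feasibility, first I would observe that replacing each continuous speed $\bar s_j$ by the smallest discrete speed $s_j\geq\bar s_j$ can only increase speeds, hence can only decrease every processing time $x_j=w_j/s_j$. Consequently both the volume $V=\sum_j x_j/m$ and the length $L$ of the critical path can only decrease relative to the continuous solution. Because the convex program of \cvxsched forces both the volume and the critical path to be at most $D/2$, the rounded assignment still satisfies $V\leq D/2$ and $L\leq D/2$. Graham's list scheduling then produces a makespan of at most $V+L\leq D$, so all tasks finish by the deadline; note that, unlike the continuous case, we do not rescale speeds, which is harmless since the energy is read off directly from the $s_j$.

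For the energy bound, I would reuse the doubling trick from the proof of Theorem~\ref{thm:apx-continuous-with-scheduling}. Starting from an optimal schedule of makespan $D$ that uses speeds at most $v_k/2$, multiplying every speed by $2$ halves all processing times, thus halving both the volume and the critical path, while keeping every speed in $[v_1,v_k]$ (they now lie in $[2v_1,v_k]$). This doubled assignment is therefore feasible for the convex program, and its energy equals $2^{\alpha-1}\OPT$. Since the program minimizes energy, the continuous speeds $\bar s_j$ it returns satisfy $\sum_j w_j\bar s_j^{\,\alpha-1}\leq 2^{\alpha-1}\OPT$. Finally, because consecutive discrete speeds differ by a factor at most $r$, rounding up gives $s_j/\bar s_j\leq r$ for every $j$, exactly as in the proof of \roundmap, so
\[
E=\sum_j w_j s_j^{\alpha-1}\leq r^{\alpha-1}\sum_j w_j\bar s_j^{\,\alpha-1}\leq r^{\alpha-1}\cdot 2^{\alpha-1}\OPT=(2r)^{\alpha-1}\OPT.
\]

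The argument is essentially a clean composition of the two earlier proofs, so I do not anticipate a genuinely hard step. The only points requiring care are the monotonicity observation that increasing speeds cannot hurt either the volume or the critical-path feasibility—which is what lets list scheduling still meet the deadline after rounding—and verifying that the doubled optimal speeds remain admissible inputs to the convex program, i.e.\ stay within $[v_1,v_k]$, which is precisely where the $v_k/2$ assumption on the optimum is used. With those two checks in hand, the $(2r)^{\alpha-1}$ factor follows immediately from multiplying the two independent losses.
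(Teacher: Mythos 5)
Your proof is correct and follows essentially the same route as the paper's: the paper likewise combines the $r^{\alpha-1}$ rounding loss (as in \roundmap) with the $2^{\alpha-1}$ loss from the speed-doubling argument of Theorem~\ref{thm:apx-continuous-with-scheduling}. Your version merely spells out the feasibility check after rounding (volume and critical path only decrease, so Graham's $V+L\leq D$ bound still applies), which the paper leaves implicit.
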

	
	\begin{proof}
		We first note that, similarly to the \roundmap case, the energy used by the schedule obtained by \roundsched is at most a factor $r^{\alpha-1}$ away from the energy used by the \cvxsched solution. Then, assuming that the optimal solution uses speeds at most $v_k/2$, we know that the energy used by the \cvxsched solution is within a factor $2^{\alpha-1}$ of the optimal energy consumption. Combining these two results completes the proof.		
	\end{proof}
	\section{Experimental Results}\label{sec:experiments}

        In order to evaluate the quality of the presented approaches, we use a total of $5 \times 5$ benchmark graphs, i.e., five groups of five graphs of similar size.
        Our 5 smallest graphs are comprised of around 10 tasks and are derived from the Embedded System Synthesis Benchmarks Suite (E3S)~\cite{e3s}. These instances target processors of maximum frequency 250MHz, with a minimum frequency equal to 0.1MHz. 20 eligible speeds 
        can be selected equally distributed between these limits. The deadlines associated to these graphs equal a few milliseconds, and are rather tight: several tasks need to be run at the maximum frequency.
        For larger graphs with 50, 100, 500, and 1000 tasks each, we selected graphs from the GENOME dataset of the Pegasus library~\cite{Pegasus}. The homogeneous processors used here were specified 
        at a maximall frequency equal to 1.0GHZ and again 20 equidistant speed setting, but assumed looser deadlines. All benchmarks belong to the class of SP-graphs, thus allowing the application of \spalgo.

        The benchmarks are executed on an Intel(R) Core(TM) i7-4770 CPU running at 3.40\,GHz with 32\,GiB of RAM using Ubuntu 18.04 LTS as underlying OS.
        To solve the ILPs for the \ilpsched and \ilpmap approaches, we use CPLEX 12.6 with a running time deadline of 5s.
        For the convex programs used by the \cvxmapping and \roundmap approaches, we used MOSEK 8.1.
        
        \Cref{fig:results-mapping-given} presents the results for the \mapping problem, both in the continuous (\cvxmapping and \spalgo) and discrete speed (\ilpmap and \roundmap) settings.
        \definecolor{clr1}{RGB}{160,25,28}
        \definecolor{clr2}{RGB}{210,140,60}
        \definecolor{clr3}{RGB}{43,131,186}
        \definecolor{clr4}{RGB}{171,221,164}
        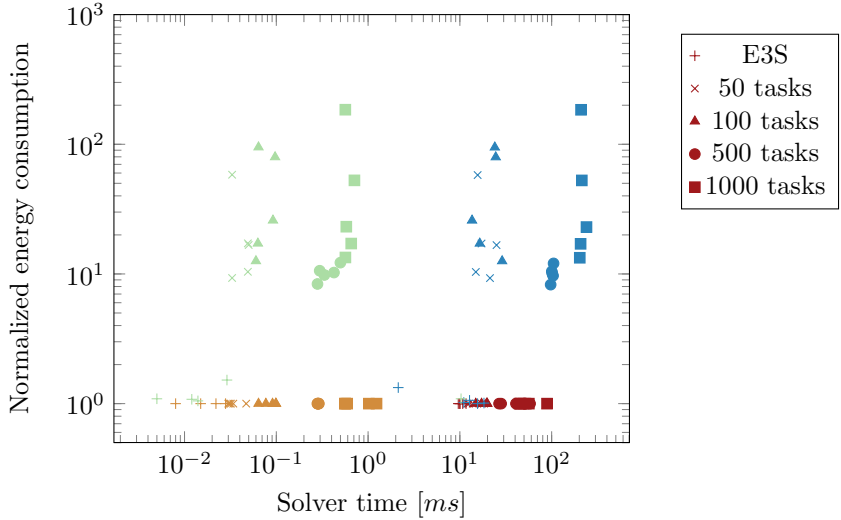
\begin{figure}[t]
          \centering
          \begin{tikzpicture}
            \begin{axis}[%
              xmode=log,
              ymode=log,
              ymax = 1000,
              xlabel={Solver time [$ms$]},
              ylabel={Normalized energy consumption},
              legend style={at={(1.1,0.75)}, anchor=west},
              legend columns=1,
              scatter/classes={%
                a={mark=+},%
                b={mark=x},%
                c={mark=triangle*},
                d={mark=*},
                e={mark=square*}
                }
            ]
            
              \addplot[color = clr1, scatter, only marks, scatter src=explicit symbolic] table[col sep=comma, meta=mark, x=IIIACVXmappingSolverTime,     y=IIIACVXmappingEnergy    ]{mapping-given.csv};
              \addplot[clr2,   scatter, only marks, scatter src=explicit symbolic] table[col sep=comma, meta=mark, x=IIIASPalgoSolverTime,         y=IIIASPalgoEnergy        ]{mapping-given.csv};
              \addplot[clr3,    scatter, only marks, scatter src=explicit symbolic] table[col sep=comma, meta=mark, x=IVAILPmappingSolverTime,      y=IVAILPmappingEnergy     ]{mapping-given.csv};
              \addplot[clr4,    scatter, only marks, scatter src=explicit symbolic] table[col sep=comma, meta=mark, x=IVAROUNDINGmappingSolverTime, y=IVAROUNDINGmappingEnergy]{mapping-given.csv};
              \addlegendentry{E3S}
              \addlegendentry{50 tasks}
              \addlegendentry{100 tasks}
              \addlegendentry{500 tasks}
              \addlegendentry{1000 tasks}
            \end{axis}
          \end{tikzpicture}
          \caption{\label{fig:results-mapping-given}Depicted above are the trade-offs between solver run-time and energy used by the solution for the four approaches --
            {\color{clr1}$\blacksquare$} \cvxmapping,
            {\color{clr2}$\blacksquare$} \spalgo,
            {\color{clr3}$\blacksquare$} \ilpmap, and
            {\color{clr4}$\blacksquare$} \roundmap{}
            -- that assume that mapping and scheduling is already given (only speed assignment).
            These trade-offs have been determined for five classes of five benchmark graphs each from the E3S benchmark suite and the Pegasus library. The energy displayed is normalized by the minimal energy consumed with continuous speeds.}
        \end{figure}

        Our first observation from the experiments is that the algorithm \spalgo can be applied to all problem instances computing an optimal solution except 
        for one single E3S graph instance where the prescribed speed limits were not respected.
        Moreover, it is really fast, needing at most 0.02ms for each of the five E3S graphs and 1ms only for the largest graphs with 1000 tasks. 
        It can therefore be applied at runtime even for problems with very small and tight deadlines. As a consequence, the algorithm \roundmap runs at a comparable speed, except for the one instance which is not solved by \spalgo.
        Even solving optimally the convex program (\cvxmapping) is possible in less than 10 ms for the E3S benchmarks, \new{15ms for 100-tasks graphs}, but may be unaffordable for very large graphs (in average 60ms for 1000 tasks).
        When solving the ILP for discrete speeds, the solver time can even increase to 200ms for the largest graphs, but we do not observe an exponential growth for this dataset, contrarily to the worst-case theoretical complexity.
        Surprisingly, the quality of the solution of \roundmap is only a few percents away from the optimal discrete solution (\ilpmap). Therefore, \roundmap can obtain near-optimal results two orders of magnitude faster than by solving the ILP, on SP-graph instances.
        The restriction to the discrete speed model implies a higher increase in energy consumption for the GENOME dataset.
        This can be explained by the fact that the deadlines are looser, so the optimal continuous speeds are lower, and being forced to select a discrete speed incurs higher losses.
        
        \Cref{fig:results-mapping-optimized} presents the results of the \roundsched algorithm that performs also task-to-core assignment and scheduling apart from speed selection. From the color code, it can be seen that the solver times are roughly equal the ones of the \cvxsched algorithm.
        \definecolor{viridis0}{HTML}{DCE319}
        \definecolor{viridis1}{HTML}{440154}
        \begin{figure}[t]
          \centering
          \begin{tikzpicture}
            \begin{axis}[
              x dir=reverse,
              xmin=0, xmax=4.8,
              xtick=\empty,
              xticklabel style={rotate=45, anchor=east,yshift=+10pt,xshift=-10pt},
              extra x ticks={0, 1, 2, 3, 4},
              extra x tick labels={E3S,50 tasks, 100 tasks, 500 tasks, 1000 tasks},
              ylabel=Free cores,
              y label style={rotate=40},
              y dir=reverse,
              ymin=1, ymax=128,
              ytick={1,32,64,96,128},
              zmode=log,
              zlabel={Normalized energy consumption},
              z label style={yshift=+2pt,xshift=+35pt},
              point meta max=150,
              point meta min=0,
              colorbar horizontal,
              colormap={reverse}{
              	indices of colormap={
              		\pgfplotscolormaplastindexof{viridis},...,0 of viridis}
              },
              colorbar style={
                  title=Solver time [$ms$],
                  at={(0,1.2)},
                  anchor=north west,
                  xtick distance=25,
              },
            ]
              \addplot3[scatter, only marks, mark=square*,   point meta=explicit] table[col sep=comma, x=graphClassInstance, y=cores, z=IVBROUNDINGschedulingEnergy, meta=IVBROUNDINGschedulingSolverTime]{mapping-optimized.csv};
              \addplot3[scatter, only marks, mark=+,         color=black        ] table[col sep=comma, x=graphClassInstance, y=cores, z=IVBOptimalEnergy                                                 ]{mapping-optimized.csv};
              \addplot3[mesh, black, domain=0:4  , samples=5, y domain=1:128, samples y=16]{1};
              \addplot3[mesh, black, domain=4:4.8, samples=2, y domain=1:128, samples y=16]{1};
            \end{axis}
          \end{tikzpicture}
          \caption{\label{fig:results-mapping-optimized}Consumed energy of the $5\times5$ benchmark graphs for solutions found by the \roundsched approach (squares) subject to a fixed number of available (free) cores ranging from 1 to 128. The results are normalized: a value of 1 corresponds to the case with optimum discrete speeds and infinitely many cores.
            The required solver time to find these solutions ranges from 7\,ms ({\color{viridis0}$\blacksquare$}) to 150\,ms ({\color{viridis1}$\blacksquare$}) according to the given color key. The crosses denote optimal-energy solutions as determined 
            by the \ilpsched approach.}
        \end{figure}
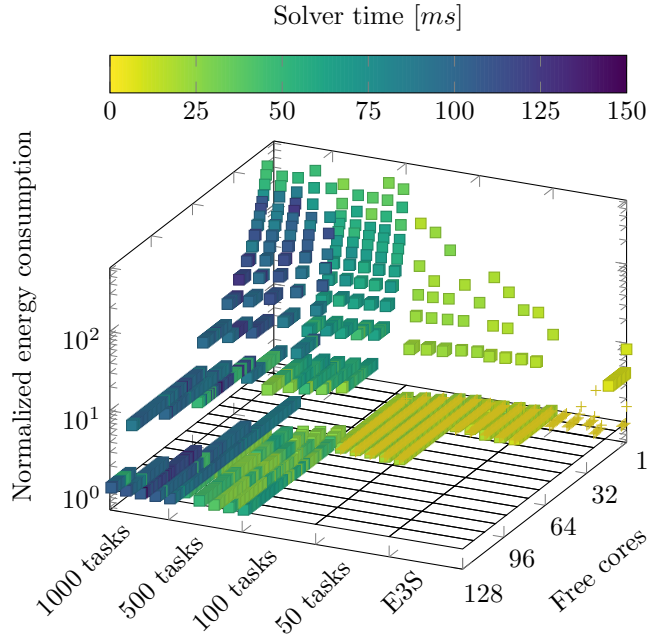
        For each of the 25 graphs, the number of cores has been varied between 1 and 128.
        In each design point, the energy of the found solution has been normalized to the optimal energy for the discrete speed case with no core constraints as determined by the \ilpmap approach.
       
        It can be seen that \roundsched is able to solve many instances of graphs with 50 to 100 tasks in less than 25ms.
        However, it does not find a solution for 4 out of 5 E3S graphs because of the tightness of deadlines assumed in these benchmarks and the assumptions made in~\Cref{roundtheo}.
        Finally, we omit to present and compare the solver times of \ilpsched as these start in the range of minutes even for the smallest and easiest problem instances. 
        Hence, we conclude this approach to be of no use to be applied on an MPSoC at run-time.

	\section{Conclusions}
	
	We have shown that for many task graphs of real-world applications, the graph structure 
	allows to determine energy-optimal speed assignments in the range of a $ms$   
	given real-time constraints by applying an algorithm called \spalgo in case tasks have been mapped 
    already to cores. For the more complex problem of additionally determining the task-to-core assignment
    and schedule of tasks on these cores, \new{even problem instances with few tasks cannot 
    be practically solved optimally at runtime}. Yet here, approximation algorithms have been analyzed and shown to offer
    affordable solving times to determine at least solutions with provable guarantees on the solution quality.

    In the future, we would like to extend our analysis of the ties between theory and 
    practice from homogeneous MPSoCs to systems with more diverse and complex 
    communication architectures. Moreover, the presented set of algorithms shall be integrated
    into a framework for run-time resource management on many-core systems
    that are required to stay within given bounds on execution time, energy and also other user-specific 
    requirement corridors.

	\bibliography{bib}
\end{document}